\newcommand\blfootnote[1]{%
  \begingroup
  \renewcommand\thefootnote{}\footnote{#1}%
  \addtocounter{footnote}{-1}%
  \endgroup
}
\newtheorem{theorem}{Theorem}
\newtheorem{lemma}[theorem]{Lemma}
\newtheorem{proposition}[theorem]{Proposition}
\theoremstyle{definition}
\newtheorem{definition}[theorem]{Definition}
\title{Fair redistricting is hard}
\author{Richard Kueng\footnote{Institute for Quantum Information and Matter, California Institute of Technology, Pasadena, CA}\qquad Dustin~G.~Mixon\footnote{Department of Mathematics, The Ohio State University, Columbus, OH}\qquad Soledad Villar\footnote{Courant Institute of Mathematical Sciences and Center for Data Science, New York University, New York, NY}}
\date{}
\begin{document}
\maketitle

\blfootnote{Send correspondence to \texttt{mixon.23@osu.edu}}

\begin{abstract}
Gerrymandering is a long-standing issue within the U.S.\ political system, and it has received scrutiny recently by the U.S.\ Supreme Court.
In this note, we prove that deciding whether there exists a fair redistricting among legal maps is $\mathsf{NP}$-hard.
To make this precise, we use simplified notions of ``legal'' and ``fair'' that account for desirable traits such as geographic compactness of districts and sufficient representation of voters.
The proof of our result is inspired by the work of Mahanjan, Minbhorkar and Varadarajan that proves that planar $k$-means is $\mathsf{NP}$-hard.
\end{abstract}

\section{Introduction}
Gerrymandering is the manipulation of district boundaries in order to favor a party or class.
It has been an issue in the U.S.\ political system for centuries.
Over this period, several legal constraints, such as the Voting Rights Act of 1965, have been installed in order to avoid certain forms of gerrymandering.
Even so, map makers with an agenda can optimize their objective (e.g., maximize the number of seats for a certain party) subject to these legal constraints.
Today, gerrymandered maps are continually brought to court, where they are struck down as illegal under federal or state law with some regularity.

In the recent Supreme Court case \textit{Gill v.\ Whitford}, it was argued that the Wisconsin State Assembly map exhibits partisan gerrymandering.
The plaintiffs offered evidence based on the efficiency gap metric~\cite{efficiency, duchin} and partisan bias metrics~\cite{GelmanK:94,GrofmanK:07}.
The Supreme Court did not rule on whether these metrics must be part of a test for partisan gerrymandering, thereby leaving open a fundamental question: 
What is an appropriate metric to detect partisan gerrymandering?
The answer to this question will undoubtedly require a robust theoretical framework for analyzing gerrymandering, and fortunately, gerrymandering is currently an active research area.

Implicit in the notion of gerrymandering is the notion of a fair map, which is subjective, and therefore hard to define.
One interesting approach along these lines is to consider a probability distribution on all legal maps, and then say that a map is fair if it is a typical instance of such a distribution~\cite{ravier, fifield}.
In order to decide whether a map is typical, researchers consider relevant observables of the distribution and determine whether they land within typical values~\cite{mattingly_NC, mattingly_WI, pegden}.
Observables of interest include the proportion of seats to votes and the difference between the mean and median votes in different districts.

One pitfall of such an approach is that the set of legal maps appears to be computationally intractable.
For instance, in order to compute the set of legal maps that improve on the existing plan (Act 43, drawn in 2011) for the Wisconsin State Assembly, one must find all possible ways to assign 6895 precincts to 99 districts so that each of the following conditions hold simultaneously~\cite{ruth}:
\begin{itemize}
\item all districts have equal population,
\item at most 58 counties can be split in different districts,
\item at most 62 municipalities can be split,
\item the average Reock score\footnote{The Reock score~\cite{Reock:61} is the ratio of the area of the district to the area of the smallest circle containing the district.} is at least 0.39,
\item the average Polsby--Popper score\footnote{The Polsby--Popper score~\cite{PolsbyP:91} is $4\pi$ times the area of the district divided by the square of its perimeter.} is at least 0.28,
\item at least 6 districts satisfy that at least 40\% of their citizens of voting age are black, and
\item districts 8 and 9 do not change (previously ordered by a federal court).
\end{itemize}

In order to work around this apparent intractability, Markov Chain Monte Carlo simulations have been developed to generate a random ensemble of representative maps, which are then used to estimate the distribution of relevant observables~\cite{ravier, fifield, ChoL:18}.
Recent work~\cite{pegden} provides local statistical tests based on Markov Chains to prove that certain maps are unfair (given that they are outliers) with no need to provide a definition nor characterization of fair redistricting. 

In this work, we prove that computational intractability is inherent to the redistricting problem.
In particular, we show that even for simple definitions of ``fair'' and ``legal,'' deciding whether there exists a fair redistricting among legal maps is $\mathsf{NP}$-hard.
While this result is mostly relevant from a theoretical point of view (i.e., worst-case complexity says very little about real-world maps), it should help researchers gauge what sort of performance guarantees are provable with their redistricting algorithms.
In words, our definition of \textit{fair} requires that a party of interest receives at least some prescribed level of representation, while our definition of \textit{legal} only requires that
\begin{itemize}
\item all districts have approximately the same population, and
\item all districts satisfy a mild notion of geographic compactness.
\end{itemize}  
While our fairness criterion is not currently a legal standard, it has been used in court to argue that a given map is the result of a partisan gerrymander (for instance, in terms of efficiency gap).
Our result identifies a fundamental tension between judging a partisan gerrymander by the shape of voting districts (i.e., violating some notion of geographic compactness) and judging it by the impact of the gerrymander (i.e., violating a desired level of proportionality or efficiency gap); see~\cite{AlexeevM:17a,AlexeevM:17b} for additional results along these lines.

\section{Main result}  

Throughout, we denote $[n]:=\{1,\ldots,n\}$.
The goal of redistricting is to partition a state into districts that satisfy various criteria.
Let $\mathscr{D}_1\sqcup\cdots\sqcup\mathscr{D}_k=\mathbb{R}^2$ denote a partition of the plane into $k$ districts.
Suppose there are $n$ voters labeled by members of $[n]$, and suppose $\mathsf{loc}\colon[n]\rightarrow\mathbb{Q}^2$ reports the location of each voter.
Then the voters that reside in district $\mathscr{D}_i$ make up the set $D_i:=\mathsf{loc}^{-1}(\mathscr{D}_i)\subseteq[n]$.
Intuitively, we want each $D_i$ to have about the same size, since this will ensure that voters from different districts receive equal representation:
\begin{itemize}
\item[(F1)]
$(1-\gamma)\frac{n}{k}\leq |D_l| \leq (1+\gamma)\frac{n}{k}$ for every $l\in[k]$.
\end{itemize}
(Here, $\gamma\in(0,1)$ is a rational number that is chosen to be appropriately small.)
Assuming equal voter turnout, then this is equivalent to \textbf{one person, one vote}, which requires that districts contain roughly equal size populations.
A series of U.S. Supreme Court decisions in the 1960s ruled that one person, one vote must hold for all levels of redistricting~\cite{Smith:14}.

Next, we consider \textbf{geographic compactness}, which is a geometric requirement on the shape of the districts $\mathscr{D}_i$.
Indeed, gerrymandering is historically detected by districts exhibiting bizarre shape (even the etymology of ``gerrymander'' comes from likening the shape of a Massachusetts voting district to the profile of a salamander~\cite{Griffith:07}).
To enforce geographic compactness, one may force all of the districts to have a large Reock score~\cite{Reock:61} or Polsby--Popper score~\cite{PolsbyP:91}.
Another popular score is the \textbf{convex hull score}, which is the ratio $|\mathscr{D}|/|\operatorname{hull}(\mathscr{D})|$ of the area of a given district $\mathscr{D}$ to the area of its convex hull~\cite{NiemiGCH:90}.
Of course, a district receives the maximum possible convex hull score (i.e., $1$) precisely when it is convex.
Notice that a partition $D_1\sqcup\cdots\sqcup D_k=[n]$ can be realized from convex districts in the plane if and only if
\begin{itemize}
\item[(F2)]
$\operatorname{hull}(\{\mathsf{loc}(i)\}_{i\in D_l})\cap \operatorname{hull}(\{\mathsf{loc}(i)\}_{i\in D_{l'}})=\emptyset$ for every $l,l'\in[k]$ with $l\neq l'$.
\end{itemize}
Note that (F2) can be checked in polynomial time by linear programming.
While the convex hull score is mathematically convenient to work with, it can lead to undesirable districts.
For example, if we partition a state into $k$ horizontal strips, then we achieve the best possible convex hull scores for each district, and yet every district will contain voters from the far-east and west sides of the state.
This lack of compactness would be indicated by a large Reock score or Polsby--Popper score.
Alternatively, we can enforce a bound on the distance between any two voters in a given district:
\begin{itemize}
\item[(F3)]
$\|\mathsf{loc}(i)-\mathsf{loc}(j)\|_2\leq d$ whenever $i,j\in D_l$ for some $l\in [k]$.
\end{itemize}
Here, $d>0$ is a rational number.

Amazingly, it is possible to concoct an extremely partisan gerrymander, even when constrained to geographically compact districts that satisfy one person, one vote.
For example, Figure~1 in~\cite{AlexeevM:17b} illustrates two different partitions of Wisconsin into eight equally populated districts that use straight line segment boundaries; the first partition makes all eight districts majority-Republican, whereas the second makes all but one district majority-Democrat (this is the most possible since Wisconsin is majority-Republican).
To defeat partisan gerrymandering in court, one might compare voter preferences to election outcomes.
Indeed, if the proportion of seats won by a party in a state is significantly different from the proportion of votes cast for that party in that state, then one might blame partisan gerrymandering for the discrepancy.
Along these lines, arguments in the U.S. Supreme Court have leveraged the notion of proportionality (in \textit{Davis v.\ Bandemer}) and of efficiency gap (in \textit{Gill v.\ Whitford}).

In our formulation, we let $\mathsf{pref}\colon[n]\rightarrow\{0,1\}$ denote the function that reports the preference of each voter.
In practice, this function can be estimated with the help of past election data.
Then for some integer $m>0$, one can ask for the following:
\begin{itemize}
\item[(F4)]
the number of $l\in[k]$ such that $|D_l\cap\mathsf{pref}^{-1}(1)|>\frac{1}{2}|D_l|$ is at least $m$.
\end{itemize}
As an extreme example, if half of the voters have preference $1$ and $k$ is large, then it is reasonable to ask for at least $m=1$ of the districts to be majority-$1$.
(We note that the lack of symmetry between preference~$1$ and preference~$0$ in (F4) will not be important to our formulation; for example, one may replace $1$ with $0$ in (F4) in order to ensure that preference-$0$ voters also receive sufficient representation.)

\begin{definition}\
\begin{itemize}
\item[(a)]
Given $\gamma\in(0,1)$ and $d>0$, we say a partition $D_1\sqcup\cdots \sqcup D_k=[n]$ is $(\gamma,d)$-\textbf{legal} for $\mathsf{loc}\colon[n]\rightarrow\mathbb{Q}^2$ if it satisfies (F1)--(F3).
\item[(b)]
Given $m>0$, we say a partition $D_1\sqcup\cdots \sqcup D_k=[n]$ is $m$-\textbf{fair} for $\mathsf{pref}\colon[n]\rightarrow\{0,1\}$ if it satisfies (F4).
\item[(c)]
For each $\gamma\in(0,1)$ and $d>0$, the \textbf{$(\gamma,d)$-legal redistricting decision problem} takes as input $(n,k,\mathsf{loc})$ and returns whether there exists a partition $D_1\sqcup\cdots \sqcup D_k=[n]$ that is $(\gamma,d)$-legal for $\mathsf{loc}$.
\item[(d)]
For each $\gamma\in(0,1)$ and $d>0$, the \textbf{fair $(\gamma,d)$-legal redistricting decision problem} takes as input $(n,k,\mathsf{loc},m,\mathsf{pref})$ and returns whether there exists a partition $D_1\sqcup\cdots \sqcup D_k=[n]$ that is both $(\gamma,d)$-legal for $\mathsf{loc}$ and $m$-fair for $\mathsf{pref}$.
\end{itemize}
\end{definition}

In words, $(\gamma,d)$-legal redistricting asks whether there exists a legal map (satisfying (F1)--(F3)), whereas fair $(\gamma,d)$-legal redistricting asks whether there exists a fair map (satisfying (F4)) among legal maps.
Note that $(\gamma,d)$-legal redistricting amounts to a planar clustering problem that is likely $\mathsf{NP}$-hard (given its resemblance to problems in~\cite{Megiddo:84,UlrichRW:94}), but this will not play a role in our result.
What follows is our main result:

\begin{figure}
\begin{center}
\begin{picture}(300,150)
\put(0,0){\includegraphics[width=300pt]{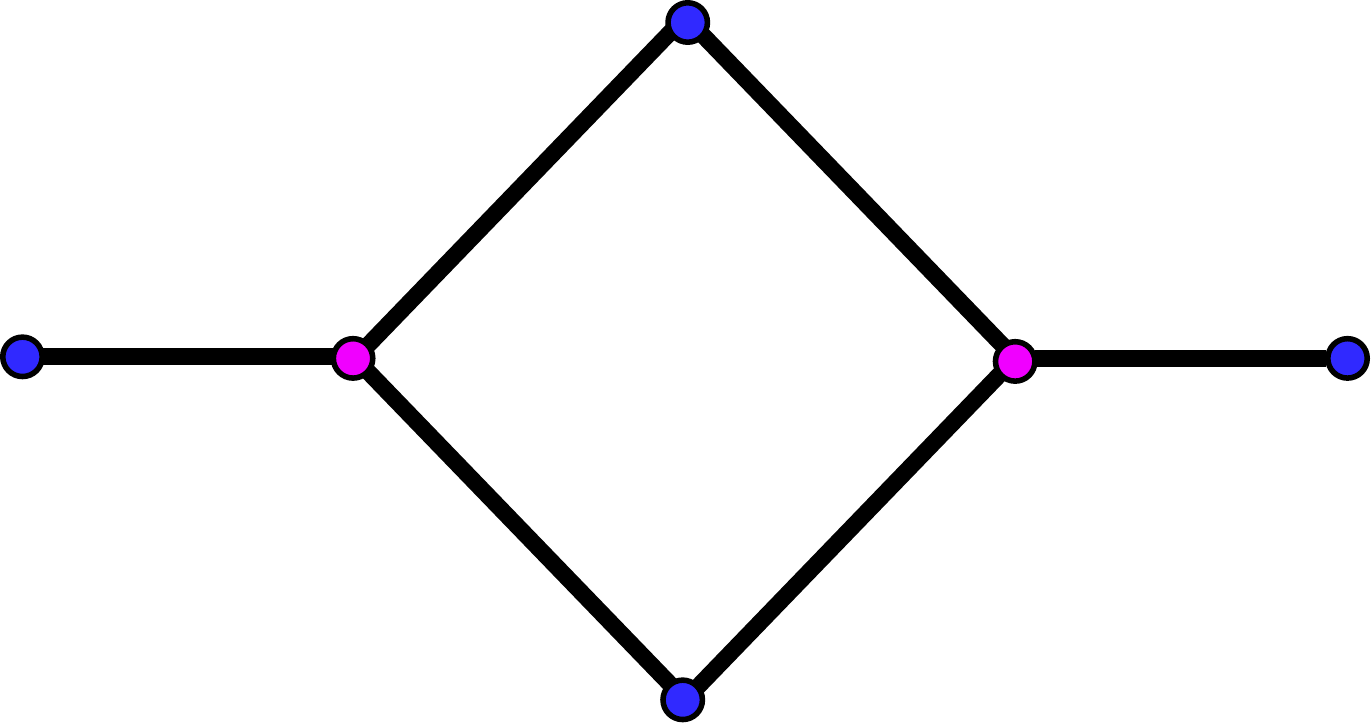}}
\put(145,165){\Large{$x_2$}}
\put(145,-10){\Large{$x_4$}}
\put(160, 0){$\longleftarrow$``variable vertex"}
\put(-15,80){\Large{$x_1$}}
\put(0,40){``clause vertex"}
\put(-5,55){$(\neg x_1 \vee x_2 \vee \neg x_4)$}
\put(65,65){$\nearrow$}
\put(210,55){$(\neg x_2 \vee \neg x_4 \vee \neg x_3)$}
\put(225,65){$\nwarrow$}
\put(305,80){\Large{$x_3$}}
\end{picture}
\end{center}
\caption{ \label{fig.planar3SAT} A planar 3-SAT instance is a special type of 3-SAT instance that corresponds to a planar graph.
The graph has one ``variable vertex" per variable, and one ``clause vertex" per clause, and two vertices share an edge precisely when one vertex is a variable vertex assigned to variable $x_j$, the other is a clause vertex assigned to clause $c_t$, and either $x_j$ or $\neg x_j$ appear in $c_t$. Here, we illustrate a planar embedding of the planar 3-SAT instance $(\neg x_1 \vee x_2 \vee \neg x_4) \wedge (\neg x_2 \vee \neg x_4 \vee \neg x_3)$.} 
\end{figure}

\begin{figure}
\begin{picture}(330,240)
\put(0,0){\includegraphics[width=\textwidth]{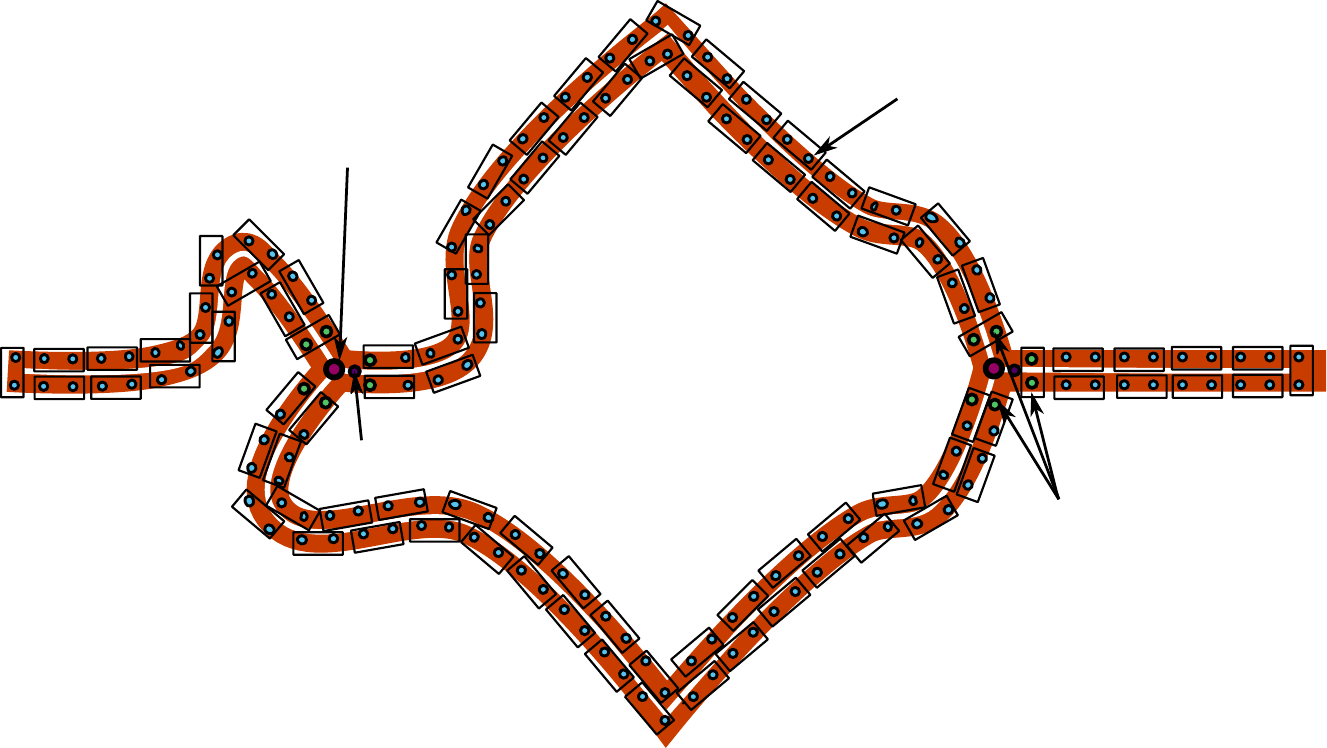}}
\put(230,270){\Large{$x_2$}}
\put(-15,130){\Large{$x_1$}}
\put(475,130){\Large{$x_3$}}
\put(230,-10){\Large{$x_4$}}
\put(360,80){``clause-adjacent towns"}
\put(295,235){``edge town"}
\put(120,100){``small clause town"}
\put(80,210){``big clause town"}
\end{picture}
\caption{\label{fig.redistricting} Our reduction converts any planar 3-SAT instance into an instance of fair $(\gamma,d)$-legal redistricting. The 3-SAT instance in this example is $(\neg x_1 \vee x_2 \vee \neg x_4) \wedge (\neg x_2 \vee \neg x_4 \vee \neg x_3)$. The districts correspond to the solution $(x_1,x_2,x_3,x_4)=(0,0,0,1)$ to the original planar 3-SAT instance. Notice that the number of edge towns between the two clauses that correspond to the variable $x_2$ ($x_4$) is odd (even) because the variable appears with opposite (equal) sign in these clauses.} 
\end{figure}

\begin{theorem}
\label{thm.main result}
For every rational $\gamma\in(0,1)$ and $d>0$, fair $(\gamma,d)$-legal redistricting is $\mathsf{NP}$-complete.
Furthermore, there exists a promise of instances over which fair $(\gamma,d)$-legal redistricting remains $\mathsf{NP}$-hard even though the corresponding instances of $(\gamma,d)$-legal redistricting can be decided in polynomial time.
\end{theorem}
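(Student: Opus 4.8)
\emph{Membership in $\mathsf{NP}$.} The plan is to exhibit the desired partition itself as a certificate. Given $D_1\sqcup\cdots\sqcup D_k=[n]$, one checks (F1) and (F4) by counting, (F3) by examining the $O(n^2)$ pairwise distances, and (F2) by solving, for each of the $\binom{k}{2}$ pairs of districts, a linear feasibility program that decides whether the two finite point sets are separated by a hyperplane (equivalently, whether their convex hulls are disjoint). All of this runs in time polynomial in the input size, so fair $(\gamma,d)$-legal redistricting lies in $\mathsf{NP}$; together with the hardness argument below, it is $\mathsf{NP}$-complete.

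\emph{The reduction.} For hardness I would reduce from planar 3-SAT (Figure~\ref{fig.planar3SAT}), which is $\mathsf{NP}$-hard and, crucially, comes equipped with a planar embedding of its variable--clause incidence graph. Given such an instance $\varphi$ with variables $x_1,\dots,x_N$ and clauses $c_1,\dots,c_M$, I would build the redistricting instance $(n,k,\mathsf{loc},m,\mathsf{pref})$ following Figure~\ref{fig.redistricting}. Voters are grouped into \emph{towns} --- clusters of a few voters placed within a minuscule radius of prescribed points --- and the towns are laid out along the edges of the (suitably rescaled) planar embedding, with $d$ chosen so that two towns lie within distance $d$ exactly when they are consecutive along a common edge. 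The town populations, the ratio $n/k$, and the slack $\gamma$ are then tuned so that in any $(\gamma,d)$-legal partition each district must consist of a short contiguous run of towns along a single edge: the diameter bound (F3) rules out long or branching districts, and the convex-hull disjointness (F2) rules out interleaved ones. Each variable $x_j$ is encoded by the cycle of \emph{edge towns} that the embedding draws around its variable vertex; this cycle has even length and hence admits exactly two tilings by legal districts, which we read as $x_j=0$ and $x_j=1$. At each clause vertex $c_t$ we install a gadget consisting of a \emph{big clause town}, a \emph{small clause town}, and some \emph{clause-adjacent towns}; the gadget can be completed to legal districts in more than one way, but it produces an extra district that is majority-preference-$1$ \emph{only if} at least one of the three incident variable cycles is tiled in the phase matching the literal of $x_j$ appearing in $c_t$. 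The number of edge towns placed between two consecutive clause attachments along the cycle of $x_j$ is chosen with a parity recording whether $x_j$ occurs with equal or opposite sign in those two clauses (the parity remark in the caption of Figure~\ref{fig.redistricting}), so that a globally consistent phase of the cycle represents a single truth value of $x_j$ at every clause it meets. Finally I would choose $\mathsf{pref}$ so that every town away from the clause gadgets is perfectly balanced --- hence never by itself the source of a majority-$1$ district --- and set $m=M$; then a $(\gamma,d)$-legal partition is $m$-fair precisely when every clause gadget yields its majority-$1$ district, i.e., precisely when the induced assignment satisfies $\varphi$.

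\emph{Correctness and the promise.} Correctness splits into two directions. From a satisfying assignment of $\varphi$, set the phase of each variable cycle accordingly and, at each clause, pick an incident cycle witnessing satisfaction; this data assembles into a partition, and one verifies (F1)--(F4) gadget by gadget. Conversely, one argues that every $(\gamma,d)$-legal-and-$m$-fair partition is forced into the intended interval structure, extracts the phase of each variable cycle, and reads off an assignment that must satisfy all $M$ clauses by the $m$-fairness count. For the last sentence of the theorem, arrange the construction so that a $(\gamma,d)$-legal partition \emph{always} exists (each variable cycle admits a legal tiling in either phase, and each clause gadget always admits some legal completion --- fairness is the only thing that can fail); then on the promise set of instances output by the reduction, the $(\gamma,d)$-legal redistricting decision problem is the constant answer ``yes'' and lies in $\mathsf{P}$, while deciding fair $(\gamma,d)$-legal redistricting over the same promise set is $\mathsf{NP}$-hard. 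I expect the main obstacle to be the rigidity analysis: choosing the town sizes, the inter-town distances relative to $d$, the slack $\gamma$, and $k$ tightly enough that no ``cheating'' district can ever help --- in particular verifying the convex-hull disjointness (F2) in every local gadget configuration and confirming that an unsatisfied clause gadget truly has no fair completion --- all while keeping the instance of polynomial size and making the parity gadget coexist with the population constraints.
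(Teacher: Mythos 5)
Your proposal is correct in outline and follows essentially the same route as the paper: NP membership via the partition as certificate (checking (F2) by linear programming), a reduction from planar 3-SAT with the identical gadget vocabulary (variable cycles of edge towns admitting exactly two perfect matchings, parity of edge-town counts encoding literal signs, clause gadgets with big/small clause towns whose pref-1 surplus creates the extra majority-$1$ districts), and the promise set taken to be the image of the reduction, on which a legal partition always exists. The two points where you diverge are minor or deferred: your bookkeeping sets $m=M$ while the paper counts both the big- and small-clause-town districts and takes $m=2M$, and the ``main obstacle'' you flag at the end --- realizing the embedding with polynomially bounded coordinate precision and verifying the rigidity of the forced matchings --- is precisely where the paper spends most of its technical effort (the Chrobak--Payne grid embedding, a lemma bounding angles and point-to-segment distances on an $\ell\times\ell$ integer grid from below by $1/(2\ell^2)$, and an explicit iterative placement of edge towns with controlled digits of precision), so that part would still need to be carried out to complete the argument.
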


In words, it is sometimes hard to find a legal redistricting that is fair, even when it is easy to find a legal redistricting.
Our proof follows by reduction from planar 3-SAT, taking inspiration from Mahanjan, Minbhorkar and Varadarajan's proof that planar $k$-means is $\mathsf{NP}$-hard~\cite{MahajanNV:12}.
Interestingly, our reduction takes $|\mathsf{pref}^{-1}(1)|>\frac{1-\gamma}{2}n$ and $m=o(k)$, meaning that in the worst case, it is hard to decide whether there exists a redistricting satisfying (F1)--(F3) that ensures that preference-1 voters win at least a vanishing fraction of the districts, even if they make up nearly half of the popular vote.
In what follows, we first describe the reduction, then we prove that the resulting instance of fair redistricting is satisfiable if and only if the original planar 3-SAT instance is satisfiable, and finally, we describe the technical details of how the reduction can be performed in polynomial time.

Given any instance of planar 3-SAT (illustrated in Figure~\ref{fig.planar3SAT}), we prescribe a corresponding instance of fair $(\gamma,d)$-legal redistricting (illustrated in Figure~\ref{fig.redistricting}).
We are given a planar bipartite graph of clause vertices and variable vertices, where each clause vertex has degree 3.
Construct a planar embedding of this graph with integer vertex coordinates.
At each clause vertex, locally deform the incident edges so that they approach the clause vertex at $0$, $\frac{2\pi}{3}$ and $\frac{4\pi}{3}$ radians.
Next, we place ``towns'' at various points in the plane relative to this graph embedding before placing voters at each town.
Pick $\eta>0$ and $\epsilon>0$ to be sufficiently small rational numbers (we will have $\epsilon\ll\eta$, and both will be polynomially small in the number of vertices).
Place a ``big clause town'' at each clause vertex and a ``small clause town'' $\epsilon$ away at $0$ radians.
About each clause vertex, place six ``clause-adjacent towns'' at distances between $0.99\eta$ and $\eta$ away and at angles approximately equal to $\frac{2\pi j}{18}$ for $j\in\{1,5,7,11,13,17\}$.
For each variable vertex $x$, consider the set $C_x\subseteq\mathbb{R}^2$ of points that are between $\eta$ and $\frac{3}{2}\eta$ away from the edges incident to $x$.
Place ``edge towns'' throughout $C_x$ so that (1) each edge town has exactly two towns that are between $0.99\eta$ and $\eta$ away (one of which might be clause-adjacent) and none closer, and (2) the number of edge towns between consecutive clause-adjacent towns is even (odd) when the signs of $x$ in these clauses are equal (opposite).

As we verify later, all of these town locations can be selected to have coordinates with a logarithmically small number of digits of precision.
Before placing voters at the various towns, we first multiply all of the towns' coordinates by the rational number $d/(\eta+\epsilon)$.
After this multiplication, the description lengths of these coordinates with remain logarithmically small.
For simplicity of exposition, we assume $d=\eta+\epsilon$ without loss of generality so that no multiplication is necessary.

We now place voters at the various towns.
Take $L>1/\gamma^2$ to be a multiple of $4$.
Each big and small clause town receives $L$ and $\lfloor \frac{2\gamma}{3} L\rfloor$ voters, respectively, all of which have preference 1.
Each clause-adjacent town receives $\frac{L}{2}+\lfloor \frac{\gamma}{6} L\rfloor$ voters, $\frac{L}{4}$ with preference 1 and $\frac{L}{4}+\lfloor \frac{\gamma}{6} L\rfloor$ with preference 0.
In each edge town, we place $\frac{L}{2}$ voters, $\frac{L}{4}-\lfloor \frac{\gamma}{4} L\rfloor$ with preference 1 and $\frac{L}{4}+\lfloor \frac{\gamma}{4} L\rfloor$ with preference 0.
Take $k$ to be the number of clauses plus half the number of non-clause towns, and set $m$ to be twice the number of clauses.

\begin{lemma}
Given any instance of planar 3-SAT, the corresponding instance of fair $(\gamma,d)$-legal redistricting is satisfiable if and only if the original instance of planar 3-SAT is satisfiable.
\end{lemma}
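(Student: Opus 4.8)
\emph{Proof proposal.}
The plan is to prove the two implications separately, both resting on a preliminary classification of which sets of towns can occur as a district in a $(\gamma,d)$-legal partition. Throughout, write $q$ for the number of clauses and $E$ for the number of edge towns, so that $k=4q+\tfrac E2$ and $n=Lk+q\bigl(\lfloor\tfrac{2\gamma}{3}L\rfloor+6\lfloor\tfrac{\gamma}{6}L\rfloor\bigr)$; since $q/k\le\tfrac14$, this gives $n/k=L+\delta$ with $0\le\delta\le\tfrac{5\gamma}{12}L$. Take $L$ to be a sufficiently large, sufficiently divisible multiple of $4$ (so that $\tfrac{\gamma}{12}L$, hence $\tfrac{\gamma}{6}L$, $\tfrac{\gamma}{4}L$, $\tfrac{2\gamma}{3}L$, are integers), and run a short check of the population bound (F1): a district containing a big clause town must be $\{\mathrm{big}\}$ or $\{\mathrm{big},\mathrm{small}\}$ (anything larger overshoots $(1+\gamma)n/k$); a district containing a small clause town must be $\{\mathrm{big},\mathrm{small}\}$, $\{\mathrm{small},\mathrm{ca},\mathrm{ca}\}$, $\{\mathrm{small},\mathrm{ca},\mathrm{edge}\}$, or $\{\mathrm{small},\mathrm{edge},\mathrm{edge}\}$ (everything else under- or over-shoots); and every remaining district is a pair of ``half towns'' (a clause-adjacent town or an edge town). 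Counting preference-$1$ voters and using the exact identities $\tfrac{2\gamma}{3}L=\tfrac{\gamma}{6}L+2\cdot\tfrac{\gamma}{4}L$ and $\tfrac{2\gamma}{3}L>2\cdot\tfrac{\gamma}{6}L$, the majority-preference-$1$ districts turn out to be exactly $\{\mathrm{big}\}$, $\{\mathrm{big},\mathrm{small}\}$, and $\{\mathrm{small},\mathrm{ca},\mathrm{ca}\}$; every pair of half towns, and also $\{\mathrm{small},\mathrm{ca},\mathrm{edge}\}$ and $\{\mathrm{small},\mathrm{edge},\mathrm{edge}\}$, is a tie or a loss for preference $1$. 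Since each of the $q$ big clause towns sits in a district of its own clause, a single clause can account for at most two majority-$1$ districts (the big-town district, plus at most one further district containing that clause's small town); hence the fairness requirement $m=2q$ forces every clause to contribute exactly two majority-$1$ districts, necessarily $\{\mathrm{big}\}$ together with some $\{\mathrm{small},\mathrm{ca},\mathrm{ca}\}$ (the option $\{\mathrm{big},\mathrm{small}\}$ yields only one).

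Next I would bring in the geometry, where the key point is that $d=\eta+\epsilon$ with $\epsilon\ll\eta$ makes (F3) very rigid. Combined with the $\pm20^\circ$ placement of the six clause-adjacent towns around each clause vertex, (F3) forces the two clause-adjacent towns of any $\{\mathrm{small},\mathrm{ca},\mathrm{ca}\}$ district to be a \emph{matched pair} flanking a single incident edge, since clause-adjacent towns flanking different edges lie at distance $\approx2\eta\sin40^\circ>d$. More generally, (F3) permits two towns to share a district only when they are ``adjacent'' in the sense hard-wired into the construction, and (F2) (disjointness of convex hulls) rules out interleaving; so after deleting the $q$ singleton $\{\mathrm{big}\}$ districts and the $q$ districts $\{\mathrm{small},\mathrm{ca},\mathrm{ca}\}$, the remaining districts form a perfect matching of the \emph{chain graph} --- the disjoint union, over variables $x$, of the cycle built from the edge towns in $C_x$ together with their flanking clause-adjacent towns --- from which, at each clause, exactly one matched pair of clause-adjacent towns has been removed. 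The parity convention in the construction (an even, resp.\ odd, number of edge towns between consecutive clause-adjacent towns according to whether $x$ occurs with equal, resp.\ opposite, sign in the two clauses) is engineered so that, once a collection of matched clause-adjacent pairs is removed, each such cycle admits a consistent perfect matching in exactly two ``phases'' $v\in\{0,1\}$, and the matched pair flanking an edge between clause $c$ and variable $x$ may be the removed pair in phase $v$ if and only if the literal of $x$ in $c$ is true under $x=v$.

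With these two analyses in hand, both directions follow. For ($\Leftarrow$): given a satisfying assignment $\sigma$, place each big clause town in its own district; at each clause choose a literal made true by $\sigma$, say that of $x$ on the edge $e$, and form $\{\mathrm{small},\mathrm{ca},\mathrm{ca}\}$ from the clause's small town together with the matched pair flanking $e$; finally pair the remaining towns of each variable's chain via its phase-$\sigma(x)$ matching. The population computations give (F1), while (F3) holds by construction and (F2) holds because all resulting districts are small and laid out along the non-crossing planar embedding. For ($\Rightarrow$): the arithmetic forces the $\{\mathrm{big}\}$ plus $\{\mathrm{small},\mathrm{ca},\mathrm{ca}\}$ structure at every clause, the geometry forces the remainder to be a consistent chain matching, from which one reads off a phase $\sigma(x)\in\{0,1\}$ for each variable; and since a matched clause-adjacent pair was removed at every clause, that clause's chosen literal is true under $\sigma$, so $\sigma$ satisfies the planar 3-SAT instance.

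I expect the crux to be the geometric-combinatorial claim of the second step: showing rigorously that (F2)--(F3) with $d=\eta+\epsilon$ admit no ``long'' or ``crossing'' districts, so that the only remaining freedom is the choice of chain matchings, and verifying that the parity conventions really do pin those matchings to the truth values of the variables with the correct polarity. I would organize this around a \emph{locality lemma} --- in any $(\gamma,d)$-legal partition every district is contained in a ball of radius $O(\eta)$ and its towns are pairwise within $d$ --- after which the analysis reduces to bookkeeping on the explicit town-adjacency graph, which the construction has arranged to be a disjoint union of even cycles carrying a clean two-coloring.
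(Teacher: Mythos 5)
Your proposal follows essentially the same route as the paper's proof: use (F1)--(F3) to classify the feasible districts (big clause towns alone or with their small town; small towns joining a flanking pair of clause-adjacent towns; all remaining towns matched in pairs along the even cycles surrounding each variable), observe that the majority-$1$ districts are exactly those containing a big or small clause town so that $m=2q$ forces every small clause town into a $\{\mathrm{small},\mathrm{ca},\mathrm{ca}\}$ district, and read off a satisfying assignment from the resulting choice of perfect matching on each cycle via the parity convention. Your write-up is, if anything, more explicit than the paper's (the paper does not spell out the preference arithmetic showing that $\{\mathrm{small},\mathrm{ca},\mathrm{edge}\}$ would be a tie), and the ``crux'' you flag --- the rigidity forced by (F2)--(F3) with $d=\eta+\epsilon$ --- is precisely the step the paper also treats informally.
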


\begin{proof}
In order to satisfy (F2), all voters in a given town belong to the same district.
Next, (F1) and (F3) together force every edge town to be matched with one of the two towns that are approximately $\eta$ away.
Each clause-adjacent town is either matched with the edge town approximately $\eta$ away, or the nearest clause-adjacent town.
In the latter case, the corresponding small clause town may also join the district, but it may not join any other clause-adjacent town in order to maintain (F3).
The big clause towns may be matched with the corresponding small clause town when it is not matched with clause-adjacent towns; otherwise, these big clause towns are so big that they form their own districts in order  to satisfy (F1).

Overall, for each variable $x$, the corresponding edge towns and clause-adjacent towns are perfectly matched in one of two ways (namely, one of two perfect matchings in an even cyclic graph).
For a clause that includes $x$ (the negation of $x$), we may interpret the corresponding clause-adjacent towns as sharing a district precisely when $x$ is true (false).
As such, the underlying instance of planar 3-SAT is satisfiable precisely when there exist districts satisfying (F1)--(F3) such that, for every clause, there exists a corresponding pair of clause-adjacent towns that share a district (in which case, the small clause town may join their district).

The majority-1 districts are precisely the ones that contain a big or small clause town, and so there are at most $m$ such districts.
Equality occurs, namely (F4), precisely when each of the small clause towns is matched with some pair of clause-adjacent towns, which is feasible precisely when the underlying instance of planar 3-SAT is satisfiable.
\end{proof}

Next, we quickly verify that given any instance of planar 3-SAT, the corresponding instance of $(\gamma,d)$-legal redistricting can be solved in polynomial time.
(Then the promise of instances in Theorem~\ref{thm.main result} can be taken to be the image of our reduction from planar 3-SAT.)
First, assign each town that contains $L$ voters to its own district, and then add as many voters to these districts as possible while satisfying (F1) and (F3).
After doing so, there will be $m/2$ districts, each containing a big clause town and the corresponding small clause town.
Next, define a graph such that the vertices are the remaining towns, with two towns being adjacent if their distance is at most $\eta$.
This graph is a disjoint union of cycles, and one may partition the remaining towns into districts by selecting any perfect matching of towns.
The result is a $(\gamma,d)$-legal redistricting that can be computed in polynomial time.

To prove Theorem~\ref{thm.main result}, it remains to show that our reduction from planar 3-SAT can be accomplished in polynomial time.
To this end, there are three nontrivial subroutines to analyze:
\begin{itemize}
\item[(i)]
Given an instance of planar 3-SAT, find a corresponding planar embedding.
\item[(ii)]
Given a planar 3-SAT embedding, locally deform the edges incident to each clause vertex.
\item[(iii)]
Place towns throughout the plane with the appropriate geometry and parity.
\end{itemize}
For (i), we appeal to the following:

\begin{proposition}[Chrobrak--Payne~\cite{chrobak1995linear}]
\label{prop.embedding}
There exists a linear-time algorithm that, given an $N$-vertex planar graph as input, outputs a planar embedding of that graph such that
\begin{itemize}
\item[(a)]
the embedded vertices lie in a $2N\times 2N$ integer grid, and
\item[(b)]
the embedded edges are line segments.
\end{itemize}
\end{proposition}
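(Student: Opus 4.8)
The plan is to follow the \emph{shift method} of de~Fraysseix, Pach, and Pollack and then layer on the bookkeeping that makes it run in linear time. First I would reduce to the case of a \textbf{maximal planar graph}: run a linear-time planarity algorithm (Hopcroft--Tarjan, or Boyer--Myrvold) to obtain a combinatorial embedding, then add edges to triangulate every face, producing a maximal planar graph $G'$ on the same $N$ vertices. Any straight-line grid drawing of $G'$ restricts to one of the original graph, so it suffices to draw $G'$; at the very end we delete the edges that were added. We may assume $N\geq 3$ and fix an outer triangle $v_1v_2v_N$.

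Next I would compute a \textbf{canonical ordering} $v_1,v_2,\dots,v_N$ of the vertices of $G'$ in linear time. This is an ordering such that, for every $k$ with $3\leq k\leq N$, the subgraph $G_k$ induced on $\{v_1,\dots,v_k\}$ is $2$-connected with all bounded faces triangles, its outer boundary is a cycle $C_k$ through the edge $v_1v_2$, and the neighbors of $v_k$ in $C_{k-1}$ form a nonempty contiguous subpath $w_p,w_{p+1},\dots,w_q$ of $C_{k-1}$ not using the edge $v_1v_2$. Such an ordering exists for every maximal planar graph, and a linear-time procedure to produce one is standard.

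With the ordering in hand I would run the incremental placement. Start with $v_1=(0,0)$, $v_2=(2,0)$, $v_3=(1,1)$, and maintain the invariant that, writing $C_{k-1}=(v_1=w_1,w_2,\dots,w_t=v_2)$ in boundary order, the $x$-coordinates of $w_1,\dots,w_t$ strictly increase and each boundary edge $w_iw_{i+1}$ has slope $+1$ or $-1$. To insert $v_k$ with contiguous neighbor interval $w_p,\dots,w_q$: shift $w_{p+1},\dots,w_t$ (together with everything ``below'' them in the current drawing) one unit to the right, then shift $w_q,\dots,w_t$ one further unit to the right, and place $v_k$ at the unique lattice point where the slope-$(+1)$ line through $w_p$ meets the slope-$(-1)$ line through the new position of $w_q$. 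One checks, exactly as in de~Fraysseix--Pach--Pollack, that the shift sets can be chosen so that $v_kw_p$ and $v_kw_q$ have slopes $\pm 1$, the edges $v_kw_{p+1},\dots,v_kw_{q-1}$ lie strictly inside the new region, no crossings are introduced, and the invariant persists. Each of the $N-3$ insertions widens the drawing by exactly $2$ and raises its height by at most $1$, so the final drawing sits inside a $(2N-4)\times(N-2)$ box; translating to put the minimum coordinates at the origin gives an integer grid drawing contained in the $2N\times 2N$ grid, and deleting the triangulation edges finishes the embedding.

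The main obstacle is the \textbf{linear running time}, which is the entire content of the Chrobak--Payne refinement. Done naively, a single shift touches $\Theta(N)$ vertices, for $\Theta(N^2)$ total; de~Fraysseix--Pach--Pollack reach $O(N\log N)$ using a balanced tree on the boundary. To get to $O(N)$ one never stores absolute $x$-coordinates during the run: instead each vertex $v$ records only the offset $\Delta x(v)$ from the $x$-coordinate of its parent in a forest whose roots trace the current outer boundary (the $y$-coordinate of a vertex is fixed once the vertex is placed, being forced by its two slope-$\pm 1$ edges). Then shifting a boundary vertex together with everything below it becomes $O(1)$ updates to offsets at boundary vertices, since every descendant inherits the shift through its parent pointer; after all insertions, a single post-order traversal accumulates the offsets into absolute coordinates in $O(N)$ time. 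The delicate point to verify is that this forest stays consistent with the shift sets the canonical ordering prescribes and that the $O(1)$-per-step updates reproduce the drawing of the shift method exactly. Granting that, and combining with the linear-time planarity and canonical-ordering routines, the whole reduction runs in $O(N)$ time.
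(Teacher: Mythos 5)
The paper does not prove this proposition at all: it is imported verbatim from Chrobak--Payne \cite{chrobak1995linear} and used as a black box, so there is no internal argument to compare against. Your sketch is a correct reconstruction of the standard proof from the literature: triangulate via a linear-time planarity algorithm, compute a canonical ordering, run the de~Fraysseix--Pach--Pollack shift method to get a $(2N-4)\times(N-2)$ straight-line grid drawing, and replace absolute coordinates with parent-relative offsets to bring the total work down from $\Theta(N^2)$ (or $O(N\log N)$ with a balanced tree) to $O(N)$. The one place you explicitly defer --- verifying that the offset forest remains consistent with the shift sets and reproduces the shift-method drawing exactly --- is precisely the technical content of the Chrobak--Payne refinement, so a fully self-contained proof would still need that argument spelled out; but since the paper itself treats the result as a citation, your level of detail already exceeds what the paper provides, and nothing in your outline is wrong.
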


We will exploit the form of this embedding in our analysis of (ii) and (iii).
To accomplish (ii), we first identify a suitably small neighborhood of each clause vertex $v$, specifically, the set of points $B(v,\delta_1)$ in the plane that are within some $\delta_1>0$ of $v$ in $\infty$-norm.
Provided $\delta_1$ is small enough, then the portion of the Chrobrak--Payne embedding that resides in $B(v,\delta_1)$ amounts to three segments emanating from $v$, and the angle between any two of these segments is at least $\delta_2$ radians (for some appropriately small $\delta_2>0$).

For simplicity, we may assume that the angles of these segments are at least $\delta_2$ away from $T:=\{0,\frac{2\pi}{3},\frac{4\pi}{3}\}\cup\{\frac{\pi}{8},\frac{3\pi}{8},\frac{5\pi}{8},\frac{7\pi}{8}\}$; indeed, if this fails to hold, we can redefine $\delta_2\leftarrow \delta_2/4$, and if there is still a segment of angle $\theta$ within $\delta_2$ of $T$, then we can modify that segment to be a polygonal curve so that the portion in $B(v,\delta_1/2)$ is a segment with angle $\theta+2\delta_2$, and then redefine $\delta_1\leftarrow \delta_1/2$.
Since the angles avoid $\{0,\frac{2\pi}{3},\frac{4\pi}{3}\}$, they each reside in one of three sections: $(0,\frac{2\pi}{3})$, $(\frac{2\pi}{3},\frac{4\pi}{3})$, $(\frac{4\pi}{3},2\pi)$.
As such, there are three cases to consider: (1) each angle resides in a different section, (2) one section contains exactly two angles, and (3) one section contains all three angles.
By rotating and reflecting as necessary, one can ensure that $(0,\frac{2\pi}{3})$ receives at least as many angles as $(\frac{4\pi}{3},2\pi)$, which receives at least as many angles as $(\frac{2\pi}{3},\frac{4\pi}{3})$.

\begin{figure}
\includegraphics[width=0.3\textwidth]{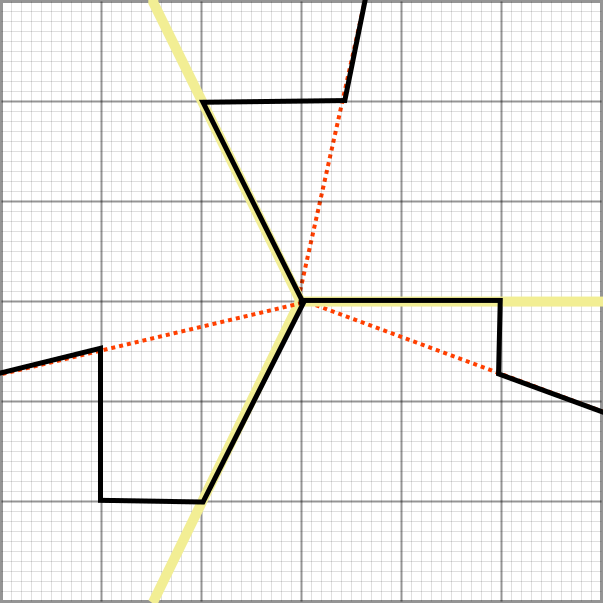}
\qquad
\includegraphics[width=0.3\textwidth]{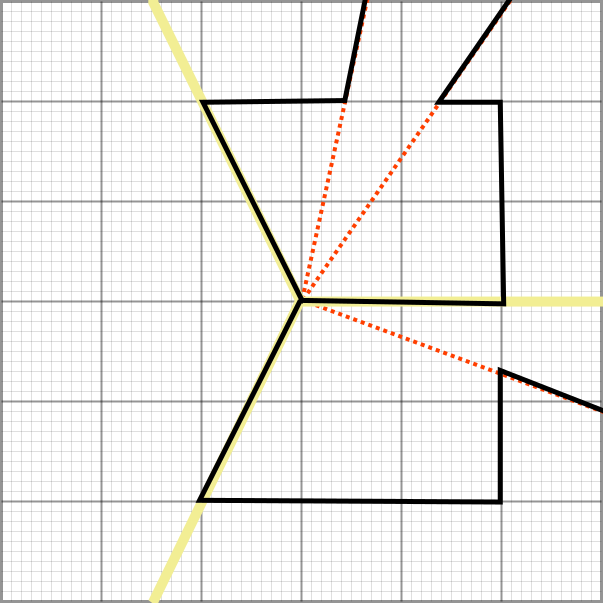}
\qquad
\includegraphics[width=0.3\textwidth]{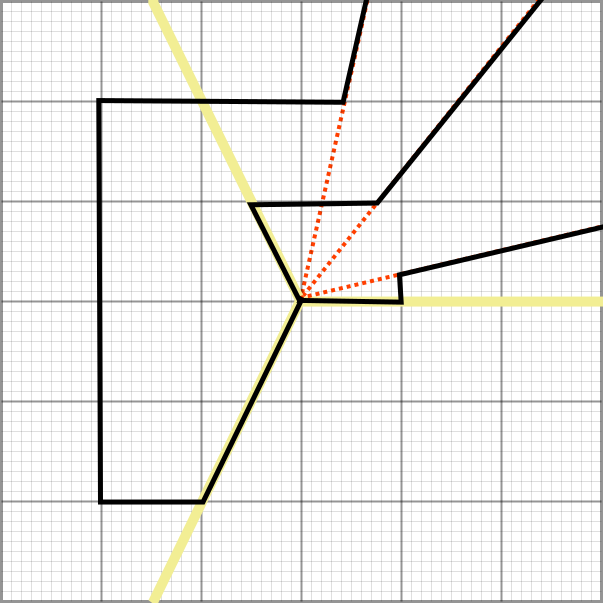}
\caption{\label{fig.redirection} Deformation of the Chrobrak--Payne embedding of a planar 3-SAT instance in an $\infty$-norm ball centered at a clause vertex with radius $\delta_1$. In each case, the clause vertex has three edges (dotted) that are redirected along the boundary of the $\infty$-norm balls of radius $\frac{\delta_1}{3}$ and/or $\frac{2\delta_1}{3}$.}
\end{figure}

With this standardized form, Figure~\ref{fig.redirection} illustrates how to modify the segments into polygonal curves in each case so that the portions in $B(v,\delta_1/3)$ are segments with angles $\{0,\frac{2\pi}{3},\frac{4\pi}{3}\}$.
Since the original segment angles were at least $\delta_2$ from $T$, the segments of the new polygonal curves are not too small.
In particular, this modification to the Chrobrak--Payne embedding has the property that each edge is embedded as a polygonal curve of segments such that
\begin{itemize}
\item[(D1)]
every segment has length at least $\delta:=\frac{1}{100}\min\{\delta_1,\delta_2\}$, and
\item[(D2)]
every pair of disjoint segments in the graph embedding has distance at least $\delta$.
\end{itemize}
Importantly, $\delta_1$ and $\delta_2$ (and therefore $\delta$) are polynomially small, which follows from Proposition~\ref{prop.embedding} and the following lemma:

\begin{lemma}
\label{lem.grid bounds}
Let $A$, $B$ and $C$ be distinct points in an $\ell\times \ell$ integer grid.
Then
\begin{itemize}
\item[(a)]
either $\sin(\angle ABC)=0$ or $\sin(\angle ABC)\geq\frac{1}{2\ell^2}$, and
\item[(b)]
either $\operatorname{dist}(A,\overline{BC})=0$ or $\operatorname{dist}(A,\overline{BC})\geq\frac{1}{2\ell^2}$.
\end{itemize}
\end{lemma}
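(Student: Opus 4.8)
Let $A$, $B$ and $C$ be distinct points in an $\ell \times \ell$ integer grid. Then (a) either $\sin(\angle ABC) = 0$ or $\sin(\angle ABC) \geq \frac{1}{2\ell^2}$, and (b) either $\operatorname{dist}(A, \overline{BC}) = 0$ or $\operatorname{dist}(A, \overline{BC}) \geq \frac{1}{2\ell^2}$.

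The plan is to reduce both statements to a single elementary observation: a determinant built from the integer coordinate differences of the three points is itself an integer, so it is either zero or at least $1$ in absolute value; dividing by a product of norms that is at most $2\ell^2$ then yields the stated bounds. Write the grid coordinates as integers confined to an interval of length at most $\ell$, so that any coordinate difference of two grid points has absolute value at most $\ell$ and any difference vector has Euclidean norm at most $\sqrt{2}\,\ell$.

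For part~(a), I would set $u := A - B$ and $v := C - B$; since $A$, $B$, $C$ are distinct, $u$ and $v$ are nonzero integer vectors. Then $\sin(\angle ABC) = |u_1 v_2 - u_2 v_1|/(\|u\|_2\|v\|_2)$. The numerator is a nonnegative integer that vanishes exactly when $u$ and $v$ are parallel, i.e., exactly when $\sin(\angle ABC)=0$; otherwise it is at least $1$. Since $\|u\|_2\|v\|_2 \le (\sqrt{2}\,\ell)^2 = 2\ell^2$, the claimed bound $\sin(\angle ABC)\ge 1/(2\ell^2)$ follows in the nonzero case.

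For part~(b), I would similarly write $w := C - B$, a nonzero integer vector, and use the standard formula $\operatorname{dist}(A,\overline{BC}) = |u_1 w_2 - u_2 w_1|/\|w\|_2$ for the distance from $A$ to the line through $B$ and $C$, where again $u = A-B$. The numerator is a nonnegative integer (it equals twice the area of triangle $ABC$ via the shoelace formula); it is $0$ exactly when $A$ lies on line $BC$, and otherwise at least $1$. Since $\|w\|_2 \le \sqrt{2}\,\ell$, this gives $\operatorname{dist}(A,\overline{BC}) \ge 1/(\sqrt{2}\,\ell) \ge 1/(2\ell^2)$ whenever it is positive. If $\overline{BC}$ is read as the segment rather than the line, the only extra case is when $A$ is a grid point on line $BC$ lying outside the segment; then the nearest point of the segment is the grid point $B$ or $C$, which is distinct from $A$, so the distance is at least $1 \ge 1/(2\ell^2)$. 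The argument is a two-line computation in each part, and there is no real obstacle; the only points needing care are the segment-versus-line distinction just noted and the bookkeeping of whether ``$\ell\times\ell$'' counts grid points or unit cells, which at worst replaces $\ell$ by $\ell-1$ and hence never weakens the stated bound.
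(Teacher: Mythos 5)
Your part~(a) is exactly the paper's argument: interpret $|u_1v_2-u_2v_1|$ as $\|u\|_2\|v\|_2\sin(\angle ABC)$, invoke integrality to get a numerator of at least $1$, and bound the product of norms by $2\ell^2$. Your part~(b), however, takes a genuinely different route. The paper reduces (b) to (a): it lets $D$ be the nearest point of $\overline{BC}$ to $A$, disposes of the cases $D=A$ and $D\in\{B,C\}$ directly (the latter by integrality of distinct grid points), and in the interior case uses the right angle at $D$ to write $\operatorname{dist}(A,D)\geq\operatorname{dist}(A,D)/\operatorname{dist}(A,B)=\sin(\angle ABD)=\sin(\angle ABC)$, then applies (a). You instead use the cross-product formula for the distance from $A$ to the \emph{line} through $B$ and $C$, getting $|u_1w_2-u_2w_1|/\|w\|_2\geq 1/(\sqrt{2}\,\ell)$ directly by integrality, and then patch the line-versus-segment discrepancy by noting that the distance to the segment dominates the distance to the line, and that if $A$ lies on the line but off the segment the nearest segment point is the grid point $B$ or $C$. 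Both are correct and both must confront the segment issue; your version is self-contained (it does not route through (a)) and actually yields the stronger bound $1/(\sqrt{2}\,\ell)$ whenever $A$ is off the line, while the paper's version is slightly slicker in that it recycles (a) and avoids quoting the point-to-line distance formula. Either proof suffices for the way the lemma is used downstream, since only polynomial smallness of $\delta$ matters.
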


\begin{proof}
Put $A,B,C\in[\ell]^2$ and denote $(a,b)=A-B$, $(c,d)=C-B$.
Then
\[
|\det(\begin{smallmatrix}a&b\\c&d\end{smallmatrix})|
=\|(a,b)\|_2\cdot \|(c,d)\|_2\cdot \sin(\angle ABC).
\]
Assuming $\sin(\angle ABC)\neq0$, then by integrality, the left-hand side is at least $1$, whereas the right-hand side is at most $2\ell^2\sin(\angle ABC)$.
Rearranging then gives (a).

For (b), let $D$ denote the closest point in $\overline{BC}$ to $A$.
If $D=A$, then the distance is zero.
If $D=B$ or $C$, then the distance is at least $1$ since in this case, $A$ and $D$ are distinct points in an integer grid.
Otherwise, $D\neq A$ is an interior point of $\overline{BC}$, and by the Hilbert projection theorem, $\angle ADB$ is a right angle.
Since $A$ and $B$ are distinct, we have $\operatorname{dist}(A,B)\geq 1$, and so
\[
\operatorname{dist}(A,\overline{BC})
=\operatorname{dist}(A,D)
\geq\frac{\operatorname{dist}(A,D)}{\operatorname{dist}(A,B)}
=\sin(\angle ABD)
=\sin(\angle ABC)
\geq\frac{1}{2\ell^2},
\]
where the last step is by (a).
This gives (b).
\end{proof}

Finally, we analyze (iii).
Let $N$ denote the number of vertices in the original planar 3-SAT instance, compute an embedding of the corresponding planar graph in a $2N\times 2N$ grid (this is possible by Propostion~\ref{prop.embedding}), and locally deform the edges incident to each clause vertex so as to satisfy (D1) and (D2) with $\delta=1/(10^4N^2)$ (this is possible by Lemma~\ref{lem.grid bounds}).
For each of the (polynomially many) segments of this new embedding, store the coordinates of each endpoint with $p:=\lceil 2\log_{10} N\rceil+10$ digits of precision.

We now describe where to place the various towns in our instance of fair redistricting.
Select $\eta\in(0,\delta/200)$ and $\epsilon\in(0,\eta/200)$, each with $p$ digits of precision.
It is helpful to define a sequence $\{z_j\}_{j\in[t]}$ of $t=1800$ points; take $z_j$ to have coordinates $(\eta\cos(2\pi j/t),\eta\sin(2\pi j/t))$, but rounded to $p+20$ digits of precision so as to have norm between $0.99\eta$ and $\eta$.
Now for each clause vertex $v\in\mathbb{Z}^2$, place a big clause town at $v$, a small clause town at $v+(\epsilon,0)$, and place clause-adjacent towns at $v+z_j$ for $j\in\{100,500,700,1100,1300,1700\}$.
Next, we will define edge town locations by iteratively adding different choices of $z_j$ to existing town locations; see Figure~\ref{fig.detailed} for an illustration.

\begin{figure}
\begin{center}
\begin{picture}(330,200)
\put(0,0){\includegraphics[width=0.7\textwidth]{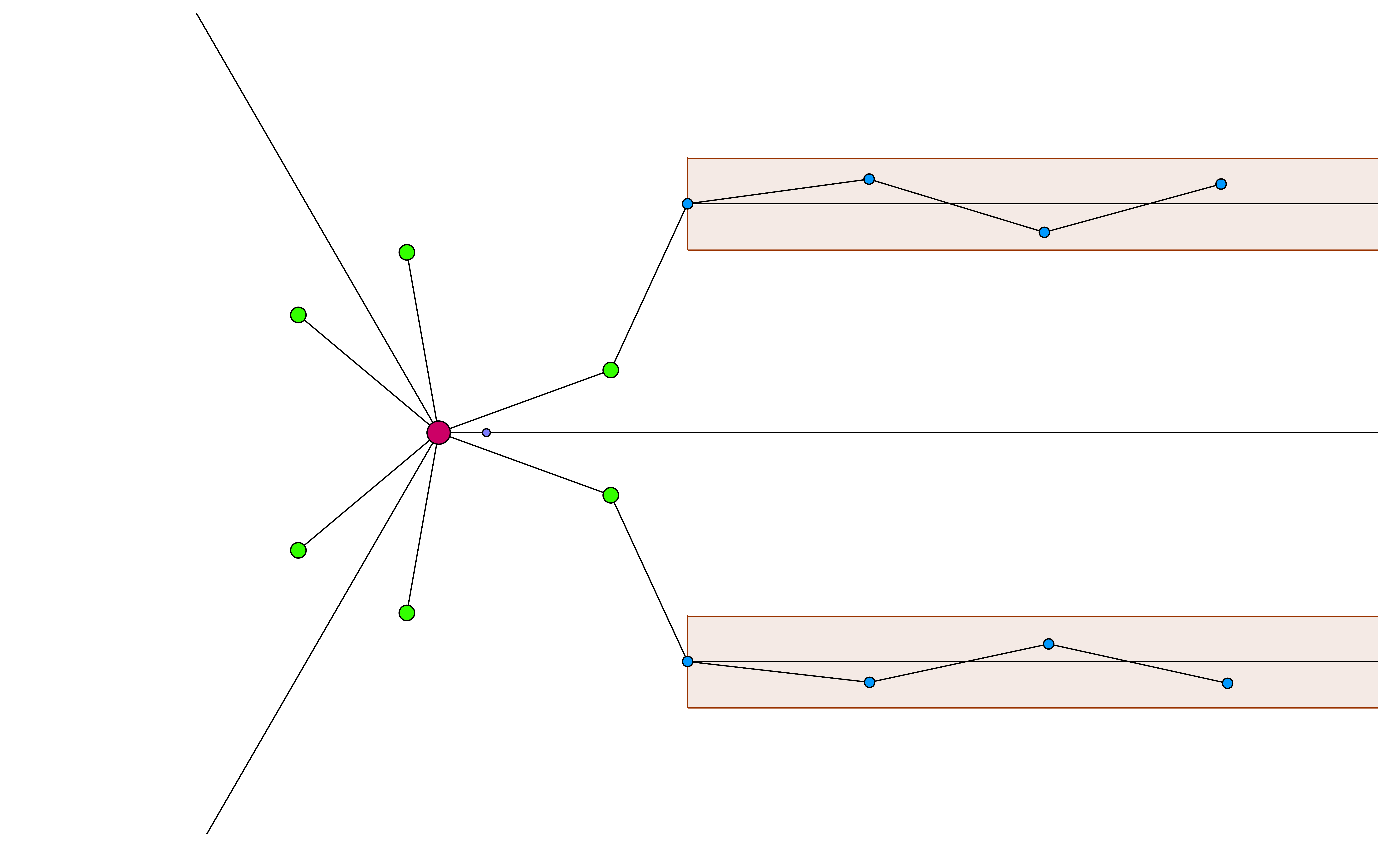}}
\put(328,115){\Large{$\eta$}}
\put(328,147){\Large{$\frac{\eta}{2}$}}
\put(119,156){$1.35\eta$}
\put(202,110){\Large{$S_x$}}
\put(320,100){\vector(0,1){37}}
\put(320,137){\vector(0,-1){37}}
\put(320,142.5){\vector(0,1){17}}
\put(320,159.5){\vector(0,-1){17}}
\put(107,151){\vector(1,0){50}}
\put(157,151){\vector(-1,0){50}}
\multiput(103,110)(0,10){7}{\line(0,10){5}}
\put(214,122){\vector(1,1){21}}
\put(212,103){\vector(1,-2){25}}
\end{picture}
\end{center}
\caption{\label{fig.detailed} Detailed illustration of towns near a clause vertex. 
At this scale, edges incident to the clause vertex amount to segments at angles $0$, $\frac{2\pi}{3}$ and $\frac{4\pi}{3}$.
Let $x$ denote the variable vertex (not depicted) that is incident to the edge at angle $0$.
Big and small clause towns are plotted in magenta and violet, whereas clause-adjacent towns are distributed about the big clause town in green. Two different components of $S_x$ are found on both sides of an edge at distance $\eta$ away. Edge towns (depicted in blue) are placed inside each of these components.}
\end{figure}

Recall that for each variable vertex $x$, $C_x$ denotes the set of points that are between $\eta$ and $\frac{3}{2}\eta$ away from the edges incident to $x$.
Let $S_x$ denote the subset obtained by removing any points from $C_x$ whose nearest point in the edges incident to $x$ is within $1.35\eta$ of a clause vertex.
In words, $S_x$ is a ``broken'' version of $C_x$; while $C_x$ is connected, the number of connected components of $S_x$ equals the number of clause vertices adjacent to $x$ in the original graph.
Furthermore, since
\[
1.35
<\cos\tfrac{\pi}{9}+\sqrt{1-(\tfrac{5}{4}-\sin\tfrac{\pi}{9})^2},
\]
then for each clause-adjacent vertex $u$ associated with an edge $e$ incident to $x$, there exists $j\in[t]$ such that $u+z_t$ lies in $S_x$ and has distance about $\frac{5}{4}\eta$ from $e$, and we place an edge town at $u+z_t\in S_x$.
To clarify, each component of $S_x$ can be thought of as an $\frac{\eta}{4}$-thickened curve, and we currently have edge towns at both ends of this curve (modulo precision).
Furthermore, it is straightforward to show that every pair of edge towns currently has distance greater than $1.08\eta>\eta+\epsilon$.
It remains to place edge towns throughout the remainder of each component.

Fix a component of some $S_x$, and let $w$ and $w'$ denote the locations of the two edge towns that have been placed in this component.
Put $w_0=w$, and given $w_i$, find $j$ such that $w_i+z_j\in S_x$ is more than $\eta+\epsilon$ away from all towns other than $w_i$ while being as close as possible to $\frac{5}{4}\eta$ away from the edge set of the embedded graph, and put an edge town at $w_{i+1}=w_i+z_j$.
(This computation is feasible since the embedded graph is comprised of polynomially many segments, and there are always polynomially many towns.)
We terminate this iteration once $w_i$ satisfies $\|w_i-w'\|_2\leq 100\eta$ and $i$ has the appropriate parity (after this iteration, we will add an odd number of edge towns to this component of $S_x$).

At this point, $\eta<\delta/200$ and (D1) together imply that the portion of $S_x$ between $w_i$ and $w'$ corresponds to a straight line segment of the edge set of the graph embedding.
For simplicity of exposition, we will assume $w_i=(0,0)$ and that $w'=(x',0)$ lies on the positive $x$-axis, with the understanding that we may rotate and translate as necessary before rounding to $p+20$ digits of precision.
With this orientation, we have $[0,x']\times[-\frac{\eta}{5},\frac{\eta}{5}]\subseteq S_x$.
Taking $r:=\lfloor x'/1.95\rfloor$, one may use the fact that $x'\geq 98\eta$ to verify that $q:=x'/r\in[1.95\eta,1.99\eta]$.
Then we place an edge town at $(jq,0)$ for every $j\in[r-1]$ and at $((j-\frac{1}{2})q,\sqrt{(0.995\eta)^2-(q/2)^2})$ for every $j\in[r]$ (the second coordinate lies in $[0,\frac{\eta}{5}]$ since $q\in[1.95\eta,1.99\eta]$).
Overall, for each component of each $S_x$, we have identified locations for edge towns in polynomial time, completing the construction of a fair $(\gamma,d)$-legal redistricting instance.

\section*{Acknowledgments}
The main ideas in this paper were conceived while the authors were attending an Oberwolfach workshop on ``Applied Harmonic Analysis and Data Processing'' in March 2018.
The authors thank Boris Alexeev and Ruth Greenwood for reading a preliminary version of this paper and providing helpful comments.
RK was supported in part by Joel A.\ Tropp under ONR Award No.\ N-00014-17-12146 and also acknowledges funding provided by the Institute of Quantum Information and Matter, an NSF Physics Frontiers Center (NSF Grant PHY-1733907).
DGM was partially supported by AFOSR F4FGA06060J007 and AFOSR Young Investigator Research Program award F4FGA06088J001.
SV was partially supported by the Simons Algorithms and Geometry (A$\&$G) Think Tank. 
The views expressed in this article are those of the authors and do not reflect the official policy or position of the United States Air Force, Department of Defense, or the U.S.\ Government.

\end{document}